\theoremstyle{plain}
\newtheorem{theorem}{Theorem}
\newtheorem{prop}[theorem]{Proposition}
\newtheorem{theo}[theorem]{Theorem}
\newtheorem{lemma}[theorem]{Lemma}
\newtheorem{definition}[theorem]{Definition}
\newtheorem{example}[theorem]{Example}
\theoremstyle{remark}
\newtheorem{remark}[theorem]{Remark}
\numberwithin{theorem}{section}
\newcommand{\be}%
  {\protect\setcounter{equation}{\value{subsubsection}}}  
\newcommand{\ee}%
\newcommand{\Z}{\mathbb{Z}}
\newcommand{\F}{\mathbb{F}}
\begin{document}

%%%%%%%%%%%%%%%%%%%%%%%%%%%%%%%%%%%%%%%%%%%%%%%%%%%%%%%%%%%%%%%%%%%%%
%%%%%%%%%%%%%%%%%%%%%%%%%%%%%%%%%%%%%%%%%%%%%%%%%%%%%%%%%%%%%%%%%%%%%
%% TOPMATTER
%%%%%%%%%%%%%%%%%%%%%%%%%%%%%%%%%%%%%%%%%%%%%%%%%%%%%%%%%%%%%%%%%%%%%

\title {Negacyclic codes of odd length over the ring $\F_p[u,v]/\langle u^2,v^2,uv-vu\rangle$}
%%%%%

%%%%%
\author{Bappaditya Ghosh} %% This is the correct form!
\let\thefootnote\relax\footnote{Email: bappaditya.ghosh86@gmail.com\\Department of Applied Mathematics,
         Indian School of Mines,
         Dhanbad 826 004,  India.}
\subjclass{94B15}

\keywords{Negacyclic codes, Hamming distance}
\begin{abstract}
We discuss the structure of negacyclic codes of odd length over the ring $\F_p[u, v]/ \langle u^2, v^2, uv-vu \rangle$. We find the unique generating set, the rank and the minimum distance for these negacyclic codes.
\end{abstract}

\maketitle

%%    LEFT AND RIGHT RUNNING HEADS.  MUST COME *AFTER* \maketitle

\markboth{B. Ghosh}{Negacyclic codes of odd length over the ring $R_{u^2,v^2,p}$}
%%%%%%%%%%%%%%%%%%%%%%%%%%%%%%%%%%%%%%%%%%%%%%%%%%%%%%%%%%%%%%%%%%%%%
%% BODY OF PAPER BEGINS HERE
%%%%%%%%%%%%%%%%%%%%%%%%%%%%%%%%%%%%%%%%%%%%%%%%%%%%%%%%%%%%%%%%%%%%%
%%%%%%%%%%%%%%%%%%%%%%%%%%%%%%%%%%%%%%%%%%%%%%%%%%%%%%%%%%%%%%%%%%%%%
%% INTRODUCTION
%%%%%%%%%%%%%%%%%%%%%%%%%%%%%%%%%%%%%%%%%%%%%%%%%%%%%%%%%%%%%%%%%%%%%
\section{Introduction}
The theory of error-correcting codes generally study the codes over the finite field. In recent time, the codes over the finite rings have been studied extensively because of their important role in algebraic coding theory. Negacyclic codes, an important class of constacyclic codes, over finite rings also have been well studied these day.

In 1960's, Berlekamp {\rm\cite{Ber68, Ber84}} introduced negacyclic codes over the field $\F_p$, $p$ odd prime, and designed a decoding algorithm that corrects up to $t\left(<\frac{p-1}{2}\right)$ Lee errors. Wolfmann {\rm\cite{Wolf99}}, in 1999,  studied negacyclic codes of odd length over $\Z_4$. In 2003, Blackford {\rm\cite{Blfd03}} extended these study to negacyclic codes of even length over $\Z_4$. 

The structure of negacyclic codes of length $n$ over a finite chain ring such that the length is not divisible by the characteristic of the residue field is obtained by Dinh and L$\acute{\text{o}}$pez-Permouth {\rm\cite{Dinh-Lopez04}} in a more general setting in the year 2004. When the length $n$ of the code is divisible by the characteristic of the residue field then the code is called a repeated-root codes.  Repeated-root negacyclic codes over finite rings have also been investigated by many authors. The structure of negacyclic codes of length $2^t$ over $\Z_{2^m}$ was obtained in {\rm\cite{Dinh-Lopez04}}. In 2005, Dinh {\rm\cite{Dinh05}} investigated negacyclic codes of length $2^s$ over the Galois ring $\text{GR}(2^a,m)$. S$\breve{\text{a}}$l$\breve{\text{a}}$gean {\rm\cite{Salag06}}, in 2006, has studied the repeated-root negacyclic codes over a finite chain ring and has shown that these codes are principally generated over the Galois ring $\text{GR}(2^a,m)$. Various kinds of distances of negacyclic 
codes of length $2^s$ over $\Z_{2^a}$ are determined in {\rm\cite{Dinhh07}}. The structure of the negacyclic codes of length $2p^s$ over the ring $\F_{p^m}+u\F_{p^m}$ have been discussed in {\rm\cite{Dinh15}}.

Let $p$ be a odd prime. In this paper we study the structure of negacyclic codes of odd length over the non chain ring $\F_p[u,v]/\langle u^2, v^2, uv-vu\rangle$. We find a unique set of generators, rank and a minimal spanning set for these codes. We also find the Hamming distance of these codes for length $p^l$.

The structures of cyclic codes over the ring $R_{u^2,v^2,p} = \F_p[u,v]/\langle u^2, v^2, uv-vu\rangle$ have been discussed in {\rm\cite{Pkk-Bg15}}. We can view the cyclic and negacyclic codes over the ring $R_{u^2,v^2,p}$ as an ideal in the rings $R_{u^2,v^2,p}[x]/\langle x^n-1 \rangle$ and $R_{u^2,v^2,p}[x]/$ $\langle x^n+1 \rangle$ respectively. We define the ring isomorphism from the ring $R_{u^2,v^2,p}[x]/$ $\langle x^n-1 \rangle$ to the ring $R_{u^2,v^2,p}[x]/ \langle x^n+1 \rangle$ to get the structure of negacyclic code over the ring $R_{u^2,v^2,p}$.

\section{Preliminaries}\label{pre}
A linear code $C$ of length $n$ over a ring $R$ is negacyclic if $(-c_{n-1}, c_0, \cdots , c_{n-2}) $ $\in C$ whenever $(c_0, c_1, \cdots , c_{n-1})\in C$. We can consider a negacyclic code $C$ of length $n$ over a ring $R$ as an ideal in the ring $R[x]/\langle x^n+1\rangle$ via the correspondence $R^n\rightarrow R[x]/\langle x^n+1\rangle$, $(c_0, c_1, \cdots , c_{n-1})\rightarrow c_0+c_1x+\cdots+c_{n-1}x^{n-1}$. Let  $R_{u^2,v^2,p}=\F_p + u\F_p+v\F_p + uv\F_p, 
u^2=0$, $v^2=0$ and $uv=vu$. This ring is isomorphic to the ring $\F_p[u,v]/\langle u^2, v^2, uv-vu\rangle$. The ring $R_{u^2,v^2,p}$ is a finite commutative local ring with the unique maximal ideal $\langle u,v\rangle$. The set $\{ \{0\}, \langle u\rangle, \langle v\rangle, \langle uv\rangle, \langle u + \alpha v\rangle$, $\langle u,v\rangle, \langle 1\rangle\}$ gives list of all ideals of the ring $R_{u^2, v^2, p}$, where $\alpha$ is a non zero element of $\F_p$. Since the maximal ideal $\langle u,v\rangle$ is not principal, the ring $R_{u^2, v^2, p}$ is not a chain ring. The residue field  $\overline{R}$ of a ring $R$ is define as $\overline{R} = R/M$, where $M$ is a maximal ideal. For the ring $R_{u^2,v^2,p}$ the residue field is $\F_p$. Let $\mu : R[x] \rightarrow \overline{R}[x]$ denote the natural ring homomorphism that maps $r \mapsto r + M$ and the variable $x$ to $x$.
 We define the degree of the polynomial $f(x) \in R[x]$ as the degree of the polynomial $\mu(f(x))$ in $\overline{R}[x]$, i.e., $deg(f(x)) = deg(\mu(f(x))$ (see, for example, \cite{McDonald74}). A polynomial $f(x) \in R[x]$ is called regular if it is not a zero divisor. The following conditions are equivalent for a finite commutative local ring $R$.
\begin{prop} {\rm (cf. \cite[Exercise XIII.2(c)]{McDonald74})} \label{regular-poly}
Let $R$ be a finite commutative local ring. Let $f(x) = a_0+a_1x+ \cdots +a_nx^n$ be in $R[x]$, then the following are equivalent.
\begin{enumerate}[{\rm (1)}]
 \item $f(x)$ is regular; \label{1}
\item  $\langle a_0, a_1, \cdots , a_n \rangle = R$; \label{4} 
\item  $a_i$ is an unit for some $i$, $0 \leq i \leq n$; \label{3}
\item  $\mu(f(x)) \neq 0$; \label{2}
\end{enumerate}
\end{prop}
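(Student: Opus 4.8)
The plan is to verify the easy equivalences among the three ``coefficient'' conditions first, using only the local structure of $R$, and then to tie these to regularity through two separate arguments: one using the nilpotence of the maximal ideal $M$, and one using a McCoy-type minimal-degree argument. Throughout I rely on the two facts that drive everything in a finite commutative local ring, namely that an element is a unit precisely when it lies outside $M$, and correspondingly that $\mu$ sends units to nonzero elements of the field $\overline{R}=R/M$ and sends every non-unit (i.e.\ every element of $M$) to $0$.

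From these facts the equivalence of the conditions $\langle a_0,\dots,a_n\rangle=R$, ``some $a_i$ is a unit'', and $\mu(f(x))\neq 0$ is immediate. Indeed $\langle a_0,\dots,a_n\rangle=R$ holds iff this ideal is not contained in $M$, iff some $a_i\notin M$, iff some $a_i$ is a unit; and $\mu(f(x))=\sum_i \mu(a_i)x^i$ is nonzero in $\overline{R}[x]$ iff some $\mu(a_i)\neq 0$, iff some $a_i\notin M$, iff some $a_i$ is a unit. So these three are mutually equivalent with essentially no effort, and it remains to connect them to regularity.

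For the implication from regularity to $\mu(f)\neq 0$ I would argue the contrapositive. If $\mu(f)=0$ then every $a_i\in M$. Since $R$ is finite, every element $x\in M$ is nilpotent: the powers of $x$ must repeat, and $x^i=x^j$ with $i<j$ forces $x^i=0$ because $x^{j-i}-1$ is a unit. Hence the finitely generated nil ideal $M$ is nilpotent, so we may choose $k$ with $M^k=0$ but $M^{k-1}\neq 0$ (the case $M=0$ being trivial, as then $R$ is a field and regularity, nonvanishing, and $\mu(f)\neq0$ all coincide). Picking a nonzero $c\in M^{k-1}$ gives $ca_i\in M^k=0$ for every $i$, whence $c\,f(x)=0$ with $c\neq 0$, so $f$ is a zero divisor, i.e.\ not regular.

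The remaining and most delicate implication is that $\mu(f)\neq 0$ (equivalently, some $a_j$ a unit) forces $f$ to be regular, and this is where the real work sits. I would suppose $f$ is a zero divisor and produce a nonzero \emph{constant} annihilating it, via the standard descending-degree argument: choose a nonzero $g=b_0+\cdots+b_mx^m$ of least degree with $fg=0$. Comparing top coefficients gives $a_nb_m=0$, so $a_ng$ has degree strictly below $m$ while still satisfying $f\,(a_ng)=a_n(fg)=0$; minimality forces $a_ng=0$, whence $fg=(f-a_nx^n)g=0$, and repeating the same step on $f-a_nx^n$ yields $a_ig=0$ for every $i$. In particular the leading coefficient relation gives $a_ib_m=0$ for all $i$, so the nonzero constant $b_m$ satisfies $b_mf=0$. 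But then if some $a_j$ is a unit, $b_ma_j=0$ gives $b_m=0$, a contradiction, so no such $g$ exists and $f$ is regular. The main obstacle is making this last minimal-degree induction fully rigorous, in particular checking that each nonzero $a_ig$ genuinely drops in degree and correctly feeds the induction; once that descending argument is in place, the four conditions close into a single equivalence.
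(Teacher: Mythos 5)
Your proof is correct, but note that the paper itself offers \emph{no} proof of this proposition: it is quoted as a known result, with the label ``(cf.\ \cite[Exercise XIII.2(c)]{McDonald74})'', and the text immediately moves on to applying it. So there is no internal argument to compare against; what you have done is supply the missing proof, and it holds up. Your reduction of (2) $\Leftrightarrow$ (3) $\Leftrightarrow$ (4) to the fact that in a local ring the non-units are exactly $M$ is exactly right; your contrapositive for (1) $\Rightarrow$ (4) is sound, since finiteness makes every element of $M$ nilpotent (the powers repeat, and $1-x^{j-i}$ is a unit because $x^{j-i}\in M$), hence $M$ is a finitely generated nil ideal and so nilpotent, and a nonzero $c\in M^{k-1}$ kills every coefficient; and your minimal-degree descent for (3) $\Rightarrow$ (1) is the standard proof of McCoy's theorem, correctly executed --- the key points being that $f\,(a_i g)=a_i(fg)=0$ always refers minimality back to the original $f$, and that the coefficient of $x^{i+m}$ in $\bigl(a_0+\cdots+a_ix^i\bigr)g$ is exactly $a_ib_m$ once the higher $a_jg$ have been killed. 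One structural observation your argument makes visible, which the bare citation hides: the three hypotheses are used in disjoint places. Localness is needed only for the coefficient equivalences (in $\Z$, $\langle 2,3\rangle=\Z$ with no coefficient a unit), finiteness only for regular $\Rightarrow$ $\mu(f)\neq 0$, and the McCoy implication (3) $\Rightarrow$ (1) holds over any commutative ring. The citation keeps the paper short; your version makes the result self-contained and shows exactly where each hypothesis enters.
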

Let $g(x)$ be a non zero polynomial in $\F_p[x]$. By above proposition, it is easy to see that the polynomial $g(x)+up_1(x)+vp_2(x)+uvp_3(x) \in R_{u^2,v^2,p}[x]$ is regular. Note that $\text{deg}(g(x)+up_1(x)+vp_2(x)+uvp_3(x)) = \text{deg}(g(x))$.

\section{Structures for negacyclic codes over the ring $R_{u^2, v^2, p}$}\label{generator}
In this section we assume that $n$ is an odd integer. Let  $R_{u^2,v^2,p}=\F_p + u\F_p+v\F_p + uv\F_p, 
u^2=0$, $v^2=0$ and $uv=vu$. The following theorem gives the ring isomorphism from the ring $R_{u^2,v^2,p}[x]/\langle x^n-1 \rangle$ to the ring $R_{u^2,v^2,p}[x]/\langle x^n+1 \rangle$
\begin{prop}\label{isomorphism}
Let $\phi :\frac{R_{u^2,v^2,p}[x]}{\langle x^n-1\rangle}\rightarrow\frac{R_{u^2,v^2,p}[x]}{\langle x^n+1\rangle}$ be a map defined as $\phi(f(x))=f(-x)$, for all $f(x)\in \frac{R_{u^2,v^2,p}[x]}{\langle x^n-1\rangle}$. The map $\phi$ is a ring isomorphism.
\end{prop}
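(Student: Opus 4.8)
The plan is to realize $\phi$ as the map induced on quotient rings by the substitution $x \mapsto -x$, and then to prove bijectivity by producing an explicit inverse of the same form rather than by counting. Write $R = R_{u^2,v^2,p}$ for brevity. First I would consider the $R$-algebra endomorphism $\Phi : R[x] \to R[x]$ determined by $\Phi(f(x)) = f(-x)$, i.e.\ the unique algebra homomorphism fixing $R$ and sending $x$ to $-x$; being a substitution homomorphism it is automatically additive and multiplicative and satisfies $\Phi(1) = 1$. Composing $\Phi$ with the quotient projection yields a ring homomorphism $R[x] \to R[x]/\langle x^n + 1\rangle$, and the entire question reduces to checking that this map descends along the quotient $R[x] \to R[x]/\langle x^n - 1\rangle$, that is, that $\phi$ does not depend on the chosen coset representative.

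The crucial step is exactly this well-definedness, and it is the only place where the hypothesis that $n$ is odd is used. I would verify that the ideal $\langle x^n - 1\rangle$ lands in $\langle x^n + 1\rangle$ by testing the generator: since $n$ is odd, $(-x)^n = -x^n$, so $\Phi(x^n - 1) = -x^n - 1 = -(x^n + 1) \equiv 0 \pmod{x^n + 1}$. Hence every multiple of $x^n - 1$ is carried to a multiple of $x^n + 1$, and so $\phi$ is a well-defined ring homomorphism. Had $n$ been even, one would instead get $\Phi(x^n - 1) = x^n - 1$, which is not divisible by $x^n + 1$, and the map would fail to descend---so oddness is genuinely essential here and nowhere else.

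Finally I would establish that $\phi$ is a bijection by exhibiting a two-sided inverse of the same type. Define $\psi : R[x]/\langle x^n + 1\rangle \to R[x]/\langle x^n - 1\rangle$ by the same rule $\psi(g(x)) = g(-x)$; the identical parity computation, now giving $\Phi(x^n + 1) = -x^n + 1 = -(x^n - 1)$, shows $\psi$ is well-defined. Because applying the substitution $x \mapsto -x$ twice returns $f(x)$, we obtain $\psi \circ \phi = \mathrm{id}$ and $\phi \circ \psi = \mathrm{id}$ on the respective quotients, so $\phi$ is a bijective ring homomorphism, hence a ring isomorphism. The whole argument is formal; the only piece of real content is the parity computation of the middle paragraph, so I expect no serious obstacle beyond taking care that the substitution is applied consistently to representatives on both sides.
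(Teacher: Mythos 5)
Your proof is correct, but it reaches the conclusion by a genuinely different route than the paper. Both arguments turn on the same parity fact---for odd $n$, $(-x)^n-1=-(x^n+1)$---but the paper deploys it inside a two-way congruence on representatives: $f(x)\equiv g(x) \pmod{x^n-1}$ if and only if $f(-x)\equiv g(-x) \pmod{x^n+1}$, which delivers well-definedness and injectivity in one stroke; surjectivity is then obtained by \emph{counting}, since the two quotient rings are finite of the same cardinality, and the homomorphism property is asserted as easy to check. You instead factor $\phi$ through the substitution endomorphism $\Phi\colon R[x]\to R[x]$, $x\mapsto -x$, which makes additivity, multiplicativity and unitality automatic, verify descent by testing the generator of the ideal, and replace the counting step with an explicit two-sided inverse $\psi$ induced by the same substitution, using that $\Phi\circ\Phi=\mathrm{id}$. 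What your approach buys: it never uses finiteness of $R_{u^2,v^2,p}$, so the identical argument proves the statement over an arbitrary commutative coefficient ring, and it isolates precisely where oddness of $n$ enters (the descent of $\langle x^n-1\rangle$ into $\langle x^n+1\rangle$). What the paper's approach buys: brevity on the injectivity side, since the bidirectional congruence handles well-definedness and one-to-oneness simultaneously without introducing the auxiliary endomorphism. Your side remark about even $n$ is also sound: in that case $x^n-1\equiv -2 \pmod{x^n+1}$, and $-2$ is a unit since $p$ is odd, so no such induced map could be well defined.
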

\begin{proof}
For polynomials $f(x), g(x)\in R_{u^2,v^2,p}[x]$,
\[f(x)\equiv g(x) ~\text{mod}~ (x^n-1);\]
if and only if there exists a polynomial $h(x)\in R_{u^2,v^2,p}[x]$ such that
\[f(x)-g(x)=h(x)(x^n-1);\]
if and only if
\[f(-x)-g(-x)=h(-x)((-x)^n-1)=-h(-x)(x^n+1);\]
if and only if
\[f(-x)\equiv g(-x) ~\text{mod}~ (x^n+1);\]
This implies that for $f(x), g(x)\in \frac{R_{u^2,v^2,p}[x]}{\langle x^n-1\rangle}$, $\phi(f(x))=\phi(g(x))$ if and only if $f(x)=g(x)$. Hence, $\phi$ is well-defined and one-to-one. Since the rings  $\frac{R_{u^2,v^2,p}[x]}{\langle x^n-1\rangle}$ and  $\frac{R_{u^2,v^2,p}[x]}{\langle x^n+1\rangle}$ are finite and of same order, $\phi$ is an onto map. It is easy to see that $\phi$ is a ring homomorphism. So $\phi$ is a ring isomorphism.
\end{proof}
\begin{remark} \label{restr-isomorphism}
We restrict the isomorphism $\phi$ to an isomorphism $\phi:\frac{\F_p[x]}{\langle x^n-1\rangle}\rightarrow\frac{\F_p[x]}{\langle x^n+1\rangle}$.
\end{remark}
Throughout this paper we use the isomorphism $\phi$ and restriction of $\phi$ defined in Proposition \ref{isomorphism} and Remark \ref{restr-isomorphism}.
\begin{prop} \label{cyclic-negacyclic}
Let $R$ be a ring. Let $A\subseteq \frac{R[x]}{\langle x^n-1\rangle}$, $B\subseteq \frac{R[x]}{\langle x^n+1\rangle}$ be two sets such that $\phi(A)=B$. Then $A$ is an ideal of $\frac{R[x]}{\langle x^n-1\rangle}$ if and only if $B$ is an ideal of $\frac{R[x]}{\langle x^n+1\rangle}$. Equivalently, $A$ is a cyclic code of length $n$ over the ring $R$ if and only if $B$ is a negacyclic code of length $n$ over the ring $R$.
\end{prop}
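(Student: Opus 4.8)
The plan is to exploit the fact that $\phi$ is a ring isomorphism and to invoke the general principle that a ring isomorphism carries ideals to ideals in both directions. Although Proposition \ref{isomorphism} is stated for the particular ring $R_{u^2,v^2,p}$, the same rule $f(x) \mapsto f(-x)$ defines a ring isomorphism $\frac{R[x]}{\langle x^n-1\rangle} \to \frac{R[x]}{\langle x^n+1\rangle}$ for an arbitrary ring $R$, its inverse being the map given by the identical formula; I would record this observation at the outset so that the argument applies to the general $R$ appearing in the statement. Since cyclic codes of length $n$ over $R$ are exactly the ideals of $\frac{R[x]}{\langle x^n-1\rangle}$ and negacyclic codes are exactly the ideals of $\frac{R[x]}{\langle x^n+1\rangle}$ (as recalled in Section \ref{pre}), the coding-theoretic equivalence follows immediately once the ideal statement is established.

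First I would prove the forward implication. Assuming $A$ is an ideal of $\frac{R[x]}{\langle x^n-1\rangle}$, I show $B = \phi(A)$ is an ideal of $\frac{R[x]}{\langle x^n+1\rangle}$. For closure under subtraction, take $b_1, b_2 \in B$, write $b_i = \phi(a_i)$ with $a_i \in A$, and use additivity of $\phi$ to obtain $b_1 - b_2 = \phi(a_1 - a_2) \in B$. For the absorption property, take any $r \in \frac{R[x]}{\langle x^n+1\rangle}$ and $b = \phi(a) \in B$; here I would use that $\phi$ is onto to write $r = \phi(s)$, and then multiplicativity of $\phi$ gives $rb = \phi(s)\phi(a) = \phi(sa) \in B$, since $sa \in A$.

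For the reverse implication, the cleanest route is to note that the inverse map $\phi^{-1} : \frac{R[x]}{\langle x^n+1\rangle} \to \frac{R[x]}{\langle x^n-1\rangle}$ is given by the same rule $g(x) \mapsto g(-x)$ and is therefore again a ring isomorphism. Applying the argument of the previous paragraph to $\phi^{-1}$ shows that if $B$ is an ideal then $\phi^{-1}(B)$ is an ideal; and since $\phi$ is a bijection with $\phi(A) = B$, we have $A = \phi^{-1}(B)$, so $A$ is an ideal. This closes the equivalence.

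There is no genuine obstacle here: the statement is a direct consequence of $\phi$ being a ring isomorphism, and the only point requiring a moment's care is the use of surjectivity of $\phi$ in verifying the absorption axiom, where one must pull back an \emph{arbitrary} ring element rather than merely an element of the image $B$. I would record both directions explicitly and then conclude, by the definitions of cyclic and negacyclic codes as ideals in the respective quotient rings, that $A$ is a cyclic code of length $n$ over $R$ if and only if $B$ is a negacyclic code of length $n$ over $R$.
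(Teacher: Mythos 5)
Your proof is correct, and its core logic---a ring isomorphism carries ideals to ideals in both directions---is exactly what the paper relies on; the paper's own proof is the single sentence that the claim ``is obvious since the map $\phi$ is a ring isomorphism.'' Where you genuinely diverge is in how you justify that $\phi$ \emph{is} an isomorphism in the generality the statement requires. The paper's Proposition~\ref{isomorphism} establishes the isomorphism only for the specific finite ring $R_{u^2,v^2,p}$, and its surjectivity argument is a counting argument (``the rings are finite and of the same order''), which does not transfer to the arbitrary ring $R$ appearing in Proposition~\ref{cyclic-negacyclic}, whose quotients $R[x]/\langle x^n-1\rangle$ may well be infinite. Your observation that the inverse map is given by the identical formula $g(x)\mapsto g(-x)$ (valid because $n$ is odd, so $(-x)^n-1 = -(x^n+1)$, an associate of $x^n+1$) supplies surjectivity directly and makes the statement genuinely valid for arbitrary $R$---so your write-up actually patches a small gap the paper leaves open. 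The explicit verification of the two ideal axioms, including the correct use of surjectivity to pull back an arbitrary ring element in the absorption step, is routine but sound.
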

\begin{proof}
The proof is obvious since the map $\phi$ is a ring isomorphism.
\end{proof}
\begin{theo} \label{negacyclic}
Let $C$ be a negacyclic code of length $n$ over the ring $R_{u^2,v^2,p}$. Then $C$ will be of the form $C=\langle g_1(x)+ug_{11}(x)+vg_{12}(x)+vug_{13}(x),ug_2(x)+vg_{22}(x)+vug_{23}(x),vg_3(x)+vug_{33}(x),vug_4(x)\rangle$, where $g_4(x)|g_2(x)|g_1(x)|(x^n+1)$ and $g_4(x)|g_3(x)|g_1(x)|(x^n+1)$.
\end{theo}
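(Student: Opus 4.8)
The plan is to transport the known structure theorem for cyclic codes over $R_{u^2,v^2,p}$ through the ring isomorphism $\phi$ of Proposition \ref{isomorphism}. The structure of cyclic codes over $R_{u^2,v^2,p}$ was determined in \cite{Pkk-Bg15}: every cyclic code of odd length $n$ admits a generating set of exactly the displayed four-term shape, the only difference being that the divisibility chains are taken with respect to $x^n-1$ rather than $x^n+1$. I would therefore begin with a negacyclic code $C\subseteq R_{u^2,v^2,p}[x]/\langle x^n+1\rangle$ and form its preimage $A=\phi^{-1}(C)$. By Proposition \ref{cyclic-negacyclic}, $A$ is a cyclic code of length $n$ over $R_{u^2,v^2,p}$, so the cited result supplies polynomials $h_1,h_{11},h_{12},h_{13},h_2,h_{22},h_{23},h_3,h_{33},h_4$ with $A=\langle h_1(x)+uh_{11}(x)+vh_{12}(x)+vuh_{13}(x),\,uh_2(x)+vh_{22}(x)+vuh_{23}(x),\,vh_3(x)+vuh_{33}(x),\,vuh_4(x)\rangle$ and with $h_4(x)\mid h_2(x)\mid h_1(x)\mid(x^n-1)$ and $h_4(x)\mid h_3(x)\mid h_1(x)\mid(x^n-1)$.

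Next I would apply $\phi$ to these four generators. Since $\phi$ is a ring isomorphism, it carries a generating set of the ideal $A$ to a generating set of the ideal $\phi(A)=C$. Because $\phi$ fixes the constants $u$, $v$, $vu$ and acts on polynomials by the substitution $x\mapsto-x$, each generator is sent to an element of the same algebraic form; for instance $h_1(x)+uh_{11}(x)+vh_{12}(x)+vuh_{13}(x)$ maps to $h_1(-x)+uh_{11}(-x)+vh_{12}(-x)+vuh_{13}(-x)$. Setting $g_1(x)=h_1(-x)$, $g_{11}(x)=h_{11}(-x)$, and likewise for all ten polynomials, the code $C$ is generated by the four stated elements.

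It then remains to check that the divisibility chains survive the substitution, and this is the single step that uses the odd-length hypothesis. Since $n$ is odd, $(-x)^n-1=-x^n-1=-(x^n+1)$, so from $h_1(x)\mid(x^n-1)$ in $\F_p[x]$ one obtains $h_1(-x)\mid(x^n+1)$, that is $g_1(x)\mid(x^n+1)$; the same substitution turns each relation $h_j(x)\mid h_k(x)$ into $g_j(x)\mid g_k(x)$, since divisibility in $\F_p[x]$ is clearly preserved by $x\mapsto-x$. Hence $g_4(x)\mid g_2(x)\mid g_1(x)\mid(x^n+1)$ and $g_4(x)\mid g_3(x)\mid g_1(x)\mid(x^n+1)$, as required. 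The main obstacle is essentially bookkeeping rather than conceptual: one must confirm that $\phi$ sends this specific four-term generating set onto a generating set of the same combinatorial form and that the entire divisibility lattice is transported intact. The restriction of $\phi$ to $\F_p[x]$ recorded in Remark \ref{restr-isomorphism}, together with the parity computation above, handles both points cleanly.
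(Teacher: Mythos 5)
Your proposal is correct and follows essentially the same route as the paper: pull the negacyclic code $C$ back through $\phi$ to a cyclic code $A$, invoke the structure theorem for cyclic codes over $R_{u^2,v^2,p}$ from \cite{Pkk-Bg15}, and push the four generators forward by the substitution $x\mapsto -x$. Your explicit verification that the divisibility chains survive the substitution (via $(-x)^n-1=-(x^n+1)$ for odd $n$) is a slightly more careful rendering of a step the paper handles implicitly through the observation $\phi(x^n-1)=-(x^n+1)$, but it is not a different argument.
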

\begin{proof}
The code $C$ is a negacyclic codes of length $n$ over the ring $R_{u^2,v^2,p}$. From Proposition \ref{cyclic-negacyclic}, we know that for the negacyclic code $C$ there exist a cyclic code, say $A$ over the same ring and of same length. We know the structure of a cyclic code over the ring $R_{u^2,v^2,p}$ from {\rm\cite{Pkk-Bg15}}. Let the cyclic code over the ring $R_{u^2,v^2,p}$ be $A=\langle g(x)+up_1(x)+vq_1(x)+vur_1(x),ua_1(x)+vq_2(x)+vur_2(x),va_2(x)+vur_3(x),vua_3(x)\rangle$, where $a_3(x)|a_1(x)|g(x)|(x^n-1)$ and  $a_3(x)|a_2(x)|g(x)|(x^n-1)$. Now the polynomials $g(x)+up_1(x)+vq_1(x)+vur_1(x)$, $ua_1(x)+vq_2(x)+vur_2(x)$, $va_2(x)+vur_3(x)$, $vua_3(x)\in \frac{R[x]}{\langle x^n-1\rangle}$. Therefore from the definition of $\phi$ from Proposition \ref{isomorphism}, we get $\phi(g(x)+up_1(x)+vq_1(x)+vur_1(x))=g(-x)+up_1(-x)+vq_1(-x)+vur_1(-x)=g_1(x)+ug_{11}(x)+vg_{12}(x)+vug_{13}(x)$, $\phi(ua_1(x)+vq_2(x)+vur_2(x))=ua_1(-x)+vq_2(-x)+vur_2(-x)=ug_2(x)+vg_{22}(x)+vug_{23}(x)$, $\phi(va_2(x)+vur_3(x))
=va_2(-x)+vur_3(-x)=vg_3(x)+vug_{33}(x)$, $\
\phi(vua_3(x))=vua_3(-x)=vug_4(x)$ and $\phi(x^n-1)=-(x^n+1)$, where $g(-x)=g_1(x)$, $a_1(-x)=g_2(x)$, $a_2(-x)=g_3(x)$, $a_3(-x)=g_4(x)$, $p_1(-x)=g_{11}(x)$, $q_1(-x)=g_{12}(x)$, $r_1(-x)=g_{13}(x)$, $q_2(-x)=g_{22}(x)$, $r_2(-x)=g_{23}(x)$, $r_3(-x)=g_{33}(x)$. Again $\phi(A)=C$. Therefore the code $C$ can be written as $C=\langle g_1(x)+ug_{11}(x)+vg_{12}(x)+vug_{13}(x),ug_2(x)+vg_{22}(x)+vug_{23}(x),vg_3(x)+vug_{33}(x), vug_4(x)\rangle$, where $g_4(x)|g_2(x)|g_1(x)|(x^n+1)$ and $g_4(x)|g_3(x)|g_1(x)|(x^n+1)$.
\end{proof}
\begin{theo}\label{unique}
Any negacyclic code $C$ of length $n$ over the ring $R_{u^2,v^2,p}$ is uniquely generated by the polynomials $A_1=g_1(x)+ug_{11}(x)+vg_{12}(x)+uvg_{13}(x), \linebreak A_2=ug_2(x)+vg_{22}(x)+uvg_{23}(x), A_3=vg_3(x)+uvg_{33}(x), A_4=uvg_4(x)$, where, $g_{ij}(x)$ are zero polynomial or $\text{deg}(g_{ij}(x))<\text{deg}(g_{j+1}(x))$ for $1\leq i\leq 3$, $i\leq j\leq 3$.
\end{theo}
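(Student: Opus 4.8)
The plan is to separate the statement into an existence claim (a generating set with the stated degree bounds can be extracted from the one in Theorem~\ref{negacyclic}) and a uniqueness claim (no other such set exists), and to push as much as possible through the isomorphism $\phi$ of Proposition~\ref{isomorphism}.

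For existence I would begin with the generators produced in Theorem~\ref{negacyclic}, that is $A_1=g_1+ug_{11}+vg_{12}+uvg_{13}$, $A_2=ug_2+vg_{22}+uvg_{23}$, $A_3=vg_3+uvg_{33}$, $A_4=uvg_4$, and trim the off-diagonal coefficients by the division algorithm in $\F_p[x]$, taking $g_1,g_2,g_3,g_4$ monic (they divide $x^n+1$). The order of reduction is what makes this clean: since $A_4$ has only a $uv$-part, $A_3$ has neither an $\F_p$- nor a $u$-part, and $A_2$ has no $\F_p$-part, subtracting an $R_{u^2,v^2,p}[x]$-multiple of a later generator never disturbs the coefficients already trimmed. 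Concretely, I would first reduce $g_{33}$ modulo $g_4$ using $A_4$; then reduce $g_{22}$ modulo $g_3$ using $A_3$ and the resulting $g_{23}$ modulo $g_4$ using $A_4$; and finally, in $A_1$, reduce $g_{11}$ modulo $g_2$ (via $A_2$), then $g_{12}$ modulo $g_3$ (via $A_3$), then $g_{13}$ modulo $g_4$ (via $A_4$). Each step subtracts a multiple of a generator, so the ideal $C$ is unchanged, and the bounds $\deg g_{ij}<\deg g_{j+1}$ hold at the end.

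For uniqueness the cleanest route is transport through $\phi$. By Proposition~\ref{cyclic-negacyclic} the code $C$ equals $\phi(A)$ for a unique cyclic code $A$ over $R_{u^2,v^2,p}$, and the companion paper \cite{Pkk-Bg15} supplies the unique reduced generating set of $A$. Since $\phi(h(x))=h(-x)$ satisfies $\deg h(-x)=\deg h$, preserves divisibility, and (because $n$ is odd) carries $x^n-1$ to an associate of $x^n+1$, it sends the reduced form of $A$ to a generating set of $C$ with exactly the same degree and divisibility relations; applying the same remarks to $\phi^{-1}$ shows any reduced generating set of $C$ pulls back to one of $A$, so uniqueness for $C$ follows from uniqueness for $A$.

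The hard part, and the reason I would route uniqueness through $\phi$ rather than argue it directly over $R_{u^2,v^2,p}$, is the invariant characterization of the divisor polynomials. One would like to recover $g_1$ as the monic generator of the image of $C$ under the projection killing $u$ and $v$ (this genuinely works), and $g_2,g_3,g_4$ as generators of analogous $\F_p$-codes attached to the filtration $\langle uv\rangle\subset\langle u\rangle,\langle v\rangle\subset\langle u,v\rangle$. The obstruction is that the naive coordinate projections do \emph{not} return these generators: for instance $\frac{x^n+1}{g_1}A_1$ lies in $C$ with vanishing $\F_p$-part but with $u$-coefficient $\frac{x^n+1}{g_1}g_{11}$, so the $u$-coordinate of the $\langle u,v\rangle$-torsion subcode is controlled by greatest common divisors involving the cross terms $g_{11},g_{22}$ rather than by $g_2$ alone. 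Making a direct argument precise therefore demands isolating each pivot through the module (not merely coordinate) structure and verifying that the cross terms leak out exactly under the stated degree bounds; transporting the already-established cyclic statement via $\phi$ sidesteps this bookkeeping, which is why I expect it to be the decisive simplification.
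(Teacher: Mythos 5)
Your proposal is correct, and it reaches the theorem by a partly different route than the paper. The paper transports \emph{both} halves of the statement through $\phi$: since $\deg f(-x)=\deg f(x)$ for $f\in\F_p[x]$, the degree bounds $\deg p_1<\deg a_1$, etc., that Theorem 3.1 of \cite{Pkk-Bg15} guarantees for the cyclic code $A$ with $\phi(A)=C$ translate verbatim into the bounds $\deg g_{ij}<\deg g_{j+1}$ (existence), and any hypothetical second reduced generator of $C$ is pulled back through $\phi^{-1}$, where uniqueness in the cyclic case forces equality (uniqueness). Your uniqueness argument is exactly this. Your existence argument is genuinely different: rather than citing the cyclic theorem, you re-derive the reduced form inside $R_{u^2,v^2,p}[x]/\langle x^n+1\rangle$ by the division algorithm, trimming the cross terms against $A_2,A_3,A_4$ in triangular order. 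That works, and the point you flag --- that $A_2$, $A_3$, $A_4$ have support only in the coordinates $u,v,uv$, then $v,uv$, then $uv$ respectively, so reductions performed in your order never disturb coefficients already trimmed --- is precisely what makes it sound (note that subtracting $q\,A_2$ from $A_1$ does alter the $v$- and $uv$-parts of $A_1$, which is why $g_{11}$ must be trimmed before $g_{12}$ and $g_{13}$). Your route buys a constructive, self-contained existence proof that does not lean on the internals of \cite{Pkk-Bg15}; the paper's route buys brevity and uniformity, with the whole theorem falling out of the cyclic structure theorem plus degree-preservation of $\phi$. Finally, your closing diagnosis of why a fully direct uniqueness proof over $R_{u^2,v^2,p}$ is awkward (coordinate projections of the torsion subcodes are contaminated by cross terms such as $\frac{x^n+1}{g_1(x)}g_{11}(x)$) is accurate, and it is implicitly the same reason the paper, like you, routes uniqueness through $\phi$.
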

\begin{proof}
The code $C$ is generated by the polynomial $A_1, A_2, A_3$ and $A_4$. Let $A=\langle g(x)+up_1(x)+vq_1(x)+vur_1(x),ua_1(x)+vq_2(x)+vur_2(x),va_2(x)+vur_3(x),vua_3(x)\rangle$ be the cyclic code over the ring $R_{u^2,v^2,p}$ such that $\phi(A)=C$. Now, for any polynomial $f(x) \in \F_p[x]$ we get $\text{deg}(f(-x))=\text{deg}(f(x))$. Therefore, $\text{deg}(\phi(f(x)))=\text{deg}(f(x))$. From above theorem, we have $\phi(g(x))=g_1(x)$, $\phi(a_1(x))=g_2(x)$, $\phi(a_2(x))=g_3(x)$, $\phi(a_3(x))=g_4(x)$, $\phi(p_1(x))=g_{11}(x)$, $\phi(q_1(x))=g_{12}(x)$, $\phi(r_1(x))=g_{13}(x)$, $\phi(q_2(x))=g_{22}(x)$, $\phi(r_2(x))=g_{23}(x)$, $\phi(r_3(x))=g_{33}(x)$. Also from Theorem $3.1$ of {\rm\cite{Pkk-Bg15}}, we have $\text{deg}(p_1(x))<\text{deg}(a_1(x))$, $\text{deg}(q_1(x))<\text{deg}(a_2(x))$, $\text{deg}(r_1(x))<\text{deg}(a_3(x))$, $\text{deg}(q_2(x))<\text{deg}(a_2(x))$, $\text{deg}(r_2(x))<\text{deg}(a_3(x))$, $\text{deg}(r_3(x))<\text{deg}(a_3(x))$. Therefore, from the relation $\text{deg}(\phi(f(x)))=\
text{deg}(f(
x))$ we can write 
$\text{deg}(\phi(p_1(x)))<\text{deg}(\phi(a_1(x)))$, $\text{deg}(\phi(q_1(x)))<\text{deg}(\phi(a_2(x)))$, $\text{deg}(\phi(r_1(x)))<\text{deg}(\phi(a_3(x)))$, $\text{deg}(\phi(q_2(x)))<\text{deg}(\phi(a_2(x)))$, $\text{deg}(\phi(r_2(x)))<\text{deg}(\phi(a_3(x)))$, $\text{deg}(\phi(r_3(x)))<\text{deg}(\phi(a_3(x)))$. This implies that $\text{deg}(g_{ij}(x))<\text{deg}(g_{j+1}(x))$ for $1\leq i\leq 3$, $i\leq j\leq 3$.
To prove uniqueness we assume that the polynomial $A_1'=g_1(x)+ug'_{11}(x)+vg'_{12}(x)+uvg'_{13}(x) \in C$ satisfies degree result $\text{deg}(g'_{1j}(x))<\text{deg}(g_{j+1}(x))$ for $1\leq j\leq 3$. Since, $\phi(A)=C$, therefore, there exists a polynomial $g(x)+up'_1(x)+vq'_1(x)+vur'_1(x)\in A$, such that $\phi(g(x)+up'_1(x)+vq'_1(x)+vur'_1(x))=g_1(x)+ug'_{11}(x)+vg'_{12}(x)+uvg'_{13}(x)$, where $\phi(p'_1(x))=p'_1(-x)=g'_{11}(x)$, $\phi(q'_1(x))=q'_1(-x)=g'_{12}(x)$, $\phi(r_1(x))=r'(-x)=g'_{13}(x)$. Since, $\text{deg}(\phi(f(x)))=\text{deg}(f(x))$, for all $f(x) \in A$, thus, $\text{deg}(\phi(p'_1(x)))=\text{deg}(p'_1(x))=\text{deg}(g'_{11}(x))$. Similarly, $\text{deg}(q'_1(x))=\text{deg}(g'_{12}(x))$, $\text{deg}(r'_1(x))=\text{deg}(g'_{13}(x))$. Therefore, $\text{deg}(p'_1(x))=\text{deg}(g'_{11}(x))<\text{deg}(g_2(x))=\text{deg}(a_1(x))$. This implies that $\text{deg}(p'_1(x))<\text{deg}(a_1(x))$ Similarly we get, $\text{deg}(q'_1(x))<\text{deg}(a_2(x))$, $\text{deg}(r'_1(x))<\text{deg}(a_3(x))$. But 
from Theorem $3.1$ of {\rm\cite{Pkk-Bg15}}, we know that the polynomial $g(x)+up_1(x)+vq_1(x)+vur_1(x) \in A$ is unique which satisfying the degree result. Therefore,  $p_1(x)=p'_1(x)$, $q_1(x)=q'_1(x)$ and $r_1(x)=r'_1(x)$. This implies that $\phi(p_1(x))=\phi(p'_1(x))$, thus $g_{11}(x)=g'_{11}(x)$. Similarly, $g_{12}(x)=g'_{12}(x)$ and $g_{13}(x)=g'_{13}(x)$. Hence, $A_1$ is unique. Similarly we can prove that $A_2, A_3$ and $A_4$ are also unique.\\
\end{proof}
\begin{theo}
Let $C=\langle g_1(x)+ug_{11}(x)+vg_{12}(x)+vug_{13}(x),ug_2(x)+vg_{22}(x)+vug_{23}(x),vg_3(x)+vug_{33}(x),vug_4(x)\rangle$ be a negacyclic code of length $n$ over the ring $R_{u^2,v^2,p}$. Then we must have the following properties
\begin{align}
&g_4(x)|g_3(x)|g_1(x), g_4(x)|g_2(x)|g_1(x)|(x^n+1),\\
&g_{i+1}(x)|\frac{x^n+1}{g_i(x)}g_{ii}(x), ~ {\rm{for}} ~ 1 \leq i \leq 3,\\
&g_3(x)|\frac{g_1(x)}{g_2(x)}g_{22}(x)\\
&g_4(x)|g_{22}(x)\\
&g_4(x)|\left(g_{11}(x)-\frac{g_1(x)}{g_3(x)}g_{33}(x)\right)\\
&g_4(x)|\left(g_{12}(x)-\frac{g_1(x)}{g_2(x)}g_{23}(x)+\frac{g_1(x)}{g_2(x)g_3(x)}g_{22}(x)g_{33}(x)\right)\\
&g_{i+j+1}(x)|\frac{x^n+1}{g_i(x)}s_{i(i+j)} ~{\rm{for}}~ 1 \leq i \leq 2 ~{\rm{and ~ for ~ a ~ fix}}~ i  ~{\rm{for}}~  1 \leq j \leq 3-i,\notag \\&{\rm{where}},~ s_{ii}=g_{ii} ~ {\rm{and}} ~ s_{i(i+j)}=g_{i(i+j)}-\sum_{l=1}^j\frac{s_{i(i+l-1)}}{g_{i+l}(x)}g_{(i+l)(i+j)}(x).
\end{align}
\end{theo}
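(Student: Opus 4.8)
The plan is to prove this the same way Theorems \ref{negacyclic} and \ref{unique} were proved: by transporting the corresponding structural properties of the associated cyclic code through the ring isomorphism $\phi$ of Proposition \ref{isomorphism}. Recall from the proof of Theorem \ref{negacyclic} that $C=\phi(A)$, where $A=\langle g(x)+up_1(x)+vq_1(x)+vur_1(x),\,ua_1(x)+vq_2(x)+vur_2(x),\,va_2(x)+vur_3(x),\,vua_3(x)\rangle$ is a cyclic code over the same ring, with $\phi(g)=g_1$, $\phi(a_1)=g_2$, $\phi(a_2)=g_3$, $\phi(a_3)=g_4$, $\phi(p_1)=g_{11}$, $\phi(q_1)=g_{12}$, $\phi(r_1)=g_{13}$, $\phi(q_2)=g_{22}$, $\phi(r_2)=g_{23}$, and $\phi(r_3)=g_{33}$. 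The reference \cite{Pkk-Bg15} establishes the divisibility relations (1)--(7) for a cyclic code verbatim, but with $x^n-1$ in place of $x^n+1$ and with $g,a_1,a_2,a_3,p_1,q_1,r_1,q_2,r_2,r_3$ in place of $g_1,g_2,g_3,g_4,g_{11},g_{12},g_{13},g_{22},g_{23},g_{33}$. Hence it suffices to show that $\phi$ carries each cyclic relation to the stated negacyclic one.

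The key facts about the restriction of $\phi$ to $\F_p[x]$ are that it is a ring homomorphism with $\phi(f(x))=f(-x)$, hence multiplicative, and that it preserves polynomial divisibility: if $f(x)=h(x)k(x)$ then $f(-x)=h(-x)k(-x)$, so $h\mid f$ forces $\phi(h)\mid\phi(f)$. Moreover, since $n$ is odd, $\phi(x^n-1)=(-x)^n-1=-(x^n+1)$. Every quotient appearing in the statement --- such as $\frac{x^n+1}{g_i(x)}$, $\frac{g_1(x)}{g_2(x)}$, or $\frac{g_1(x)}{g_2(x)g_3(x)}$ --- is an honest polynomial by virtue of the divisibility chain of property (1), so multiplicativity lets me push $\phi$ through each quotient; for instance $a_i\mid(x^n-1)$ gives $\phi\!\left(\frac{x^n-1}{a_i}\right)=\frac{\phi(x^n-1)}{\phi(a_i)}=\frac{-(x^n+1)}{g_i}$, and the unit $-1$ is irrelevant to divisibility.

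With these observations the translation of (1)--(6) is mechanical. Property (1) is exactly the divisibility chain already obtained in Theorem \ref{negacyclic} by applying $\phi$ to $a_3\mid a_2,a_1\mid g\mid(x^n-1)$. For (2)--(6) I apply $\phi$ to the corresponding cyclic relation, use that $\phi$ respects the sums, differences, and products of polynomials occurring inside the divisibility, and absorb the resulting signs into units; for example the cyclic relation $a_3\mid\left(q_1-\frac{g}{a_1}r_2+\frac{g}{a_1a_2}q_2r_3\right)$ maps term by term to $g_4\mid\left(g_{12}-\frac{g_1}{g_2}g_{23}+\frac{g_1}{g_2g_3}g_{22}g_{33}\right)$, which is (6), and the same device yields (2)--(5).

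The only step demanding genuine care is (7), whose right-hand side is defined by the recursion $s_{i(i+j)}=g_{i(i+j)}-\sum_{l=1}^{j}\frac{s_{i(i+l-1)}}{g_{i+l}(x)}g_{(i+l)(i+j)}(x)$. The main obstacle is to verify that $\phi$ intertwines this recursion with the analogous cyclic recursion, say $t_{i(i+j)}$. I would prove by induction on $j$ that $\phi(t_{i(i+j)})=s_{i(i+j)}$: granting the inductive hypothesis $\phi(t_{i(i+l-1)})=s_{i(i+l-1)}$ and the fact that each $\frac{t_{i(i+l-1)}}{a_{i+l}}$ is a polynomial, multiplicativity gives $\phi\!\left(\frac{t_{i(i+l-1)}}{a_{i+l}}\right)=\frac{s_{i(i+l-1)}}{g_{i+l}}$, so applying $\phi$ to the whole cyclic identity reproduces the displayed $s_{i(i+j)}$. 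Then applying $\phi$ to the cyclic divisibility $a_{i+j+1}\mid\frac{x^n-1}{a_i}t_{i(i+j)}$ yields (7) up to the unit $-1$. This bookkeeping, though routine, is where one must be most attentive.
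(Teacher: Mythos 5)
Your proposal is correct and takes essentially the same route as the paper: both proofs transport the divisibility properties of the associated cyclic code (Proposition 3.2 of \cite{Pkk-Bg15}) to the negacyclic setting via the isomorphism $\phi$ of Proposition \ref{isomorphism}, using the generator correspondence $\phi(g)=g_1$, $\phi(a_i)=g_{i+1}$, etc., from Theorem \ref{negacyclic}. The only difference is one of detail: the paper simply asserts that since $\phi$ restricts to a ring isomorphism $\F_p[x]/\langle x^n-1\rangle\rightarrow\F_p[x]/\langle x^n+1\rangle$ the properties carry over, whereas you explicitly verify the sign bookkeeping ($\phi(x^n-1)=-(x^n+1)$, units absorbed), the compatibility of $\phi$ with the polynomial quotients, and the induction intertwining the recursion in property (7).
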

\begin{proof}
Let $A=\langle g(x)+up_1(x)+vq_1(x)+vur_1(x),ua_1(x)+vq_2(x)+vur_2(x),\linebreak va_2(x)+vur_3(x),vua_3(x)\rangle$ be the cyclic code over the ring $R_{u^2,v^2,p}$ such that $\phi(A)=C$. Also, from Theorem \ref{negacyclic}, we have $\phi(g(x))=g_1(x)$, $\phi(a_1(x))=g_2(x)$, $\phi(a_2(x))=g_3(x)$, $\phi(a_3(x))=g_4(x)$, $\phi(p_1(x))=g_{11}(x)$, $\phi(q_1(x))=g_{12}(x)$, $\phi(r_1(x))=g_{13}(x)$, $\phi(q_2(x))=g_{22}(x)$, $\phi(r_2(x))=g_{23}(x)$, $\phi(r_3(x))=g_{33}(x)$. Now, from Remark \ref{restr-isomorphism}, we get that the map $\phi$, $\phi:\frac{\F_p[x]}{\langle x^n-1\rangle}\rightarrow\frac{\F_p[x]}{\langle x^n+1\rangle}$ such that $\phi(f(x))=f(-x)$, $\forall ~ f(x)\in \frac{\F_p[x]}{\langle x^n-1\rangle}$ is an isomorphism. Now from Proposition $3.2$ of {\rm\cite{Pkk-Bg15}}, we know that the properties are true for the ring $\frac{\F_p[x]}{\langle x^n-1\rangle}$. Therefore all of these properties are true for the ring $\frac{\F_p[x]}{\langle x^n+1\rangle}$.
\end{proof}
The following theorem characterizes the free negacyclic codes over the ring  $R_{u^2,v^2,p}$.
\begin{theo} \label{freecode}
If $C=\langle g_1(x)+ug_{11}(x)+vg_{12}(x)+vug_{13}(x),ug_2(x)+vg_{22}(x)+vug_{23}(x),vg_3(x)+vug_{33}(x),vug_4(x)\rangle$ be a negacyclic code of length $n$ over the ring $R_{u^2,v^2,p}$, then $C$ is a free negacyclic code if and only if $g_1(x)=g_4(x)$. In this case, we have $C=\langle g_1(x)+ug_{11}(x)+vg_{12}(x)+vug_{13}(x)\rangle$ and $g_1(x)+ug_{11}(x)+vg_{12}(x)+vug_{13}(x)|(x^n+1)$ in $R_{u^2,v^2,p}$.
\end{theo}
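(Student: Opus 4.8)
The plan is to deduce everything from the cyclic case through the isomorphism $\phi$ of Proposition \ref{isomorphism}, exactly as in the earlier theorems. Writing $C=\phi(A)$ with $A$ the cyclic code of Theorem \ref{negacyclic} (via Proposition \ref{cyclic-negacyclic}), the decisive point is that $\phi$ only substitutes $x\mapsto -x$ and therefore fixes the scalar ring $R_{u^2,v^2,p}$; hence $\phi$ is not merely a ring isomorphism but an $R_{u^2,v^2,p}$-module isomorphism. Freeness is a module-theoretic property, so $C$ is a free $R_{u^2,v^2,p}$-module if and only if $A$ is, and the two codes have equal rank.

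For the ``if'' implication I would argue directly. If $g_1=g_4$, then $g_4\mid g_2\mid g_1=g_4$ and $g_4\mid g_3\mid g_1=g_4$ force $g_1=g_2=g_3=g_4=:g$ (all monic divisors of $x^n+1$). Substituting this into the divisibility relations $g_4\mid g_{22}$, $g_4\mid\left(g_{11}-\tfrac{g_1}{g_3}g_{33}\right)$ and $g_4\mid\left(g_{12}-\tfrac{g_1}{g_2}g_{23}+\tfrac{g_1}{g_2g_3}g_{22}g_{33}\right)$ from the preceding theorem, and using the degree bounds $\deg g_{ij}<\deg g_{j+1}=\deg g$ of Theorem \ref{unique}, forces the remainders to vanish: $g_{22}=0$, $g_{11}=g_{33}$ and $g_{12}=g_{23}$. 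A one-line computation with $u^2=v^2=0$ then gives $uA_1=A_2$, $vA_1=A_3$ and $uvA_1=A_4$, so that $C=\langle A_1\rangle$ is principal.

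It remains to see that $A_1\mid(x^n+1)$ and that $g_1=g_4$ is also necessary. Both come from the cyclic picture: by the corresponding characterization of free cyclic codes in \cite{Pkk-Bg15}, $A$ is free precisely when $g=a_3$, in which case the single generator $g+up_1+vq_1+vur_1$ divides $x^n-1$ in $R_{u^2,v^2,p}[x]$. Extending $\phi$ to the automorphism $f(x)\mapsto f(-x)$ of $R_{u^2,v^2,p}[x]$ and using $\phi(x^n-1)=-(x^n+1)$, this divisibility transfers to $A_1\mid(x^n+1)$, the unit $-1$ being harmless. Finally, since $\phi$ is injective, $g=a_3$ holds if and only if $g_1=\phi(g)=\phi(a_3)=g_4$; combined with the module isomorphism above this yields the stated equivalence in both directions.

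The step I expect to be the real obstacle is the necessity (``only if'') direction: showing that a code of this genuinely four-generator shape cannot be free once $g_1\neq g_4$. Proving this by hand would require a cardinality and rank computation over the non-chain ring $R_{u^2,v^2,p}$, where, unlike over a chain ring, a submodule of the ``right'' size need not be free and one must instead control the minimal number of generators, i.e. the $\F_p$-dimension of $C/\langle u,v\rangle C$. Routing the argument through $\phi$ sidesteps this, provided one has the cyclic free-code classification; the only care needed in the transfer is the two points flagged above, namely that $\phi$ is $R_{u^2,v^2,p}$-linear, so that freeness is preserved, and that the sign in $\phi(x^n-1)=-(x^n+1)$ does not affect the divisibility claim.
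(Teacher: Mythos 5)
Your proposal is correct, and its backbone is the same as the paper's: pull $C$ back through $\phi$ to the cyclic code $A$ of \cite{Pkk-Bg15}, invoke Proposition 3.3 there (free iff $g(x)=a_3(x)$, in which case $A$ is principal with generator dividing $x^n-1$), convert $g(x)=a_3(x)$ into $g_1(x)=g_4(x)$ by injectivity of $\phi$, and note that the sign in $\phi(x^n-1)=-(x^n+1)$ is a unit so the divisibility transfers. You add two things the paper does not spell out. First, you observe that $\phi$ fixes the scalars $R_{u^2,v^2,p}$ and is therefore an $R_{u^2,v^2,p}$-module isomorphism, so freeness genuinely transfers between $A$ and $C$; the paper uses this silently, and it is precisely what legitimizes both directions of the equivalence (in particular the ``only if'' direction), so making it explicit closes a real expository gap. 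Second, your direct derivation of principality when $g_1(x)=g_4(x)$ --- forcing $g_1=g_2=g_3=g_4$, then $g_{22}=0$, $g_{11}=g_{33}$, $g_{12}=g_{23}$ from the divisibility relations of the preceding theorem together with the degree bounds, whence $uA_1=A_2$, $vA_1=A_3$, $uvA_1=A_4$ --- is a correct, self-contained alternative to citing the cyclic result for that step; the one caveat is that it presupposes the generators are in the unique form of Theorem \ref{unique}, since otherwise the degree bounds you use are not available. What the paper's route buys is brevity; what yours buys is an ``if'' half that can be verified without opening \cite{Pkk-Bg15}, plus a rigorous statement of why freeness is preserved under $\phi$ at all.
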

\begin{proof}
We are given that $C$ is a negacyclic code over the ring $R_{u^2,v^2,p}$. Hence from Proposition \ref{cyclic-negacyclic}, there exist one and only one cyclic code $A$ over the ring $R_{u^2,v^2,p}$ such that $\phi(A)=C$. Let the cyclic code be $A=\langle g(x)+up_1(x)+vq_1(x)+vur_1(x),ua_1(x)+vq_2(x)+vur_2(x),va_2(x)+vur_3(x),vua_3(x)\rangle$, where\linebreak $a_3(x)|a_1(x)|g(x)|(x^n-1)$ and $a_3(x)|a_2(x)|g(x)|(x^n-1)$. Therefore we have, $\phi(g(x)+up_1(x)+vq_1(x)+vur_1(x))=g_1(x)+ug_{11}(x)+vg_{12}(x)+vug_{13}(x)$, $\phi(ua_1(x)+vq_2(x)+vur_2(x))=ug_2(x)+vg_{22}(x)+vug_{23}(x)$, $\phi(va_2(x)+vur_3(x))=vg_3(x)+vug_{33}(x)$, $\phi(vua_3(x))=vug_4(x)$ and $\phi(x^n-1)=-(x^n+1)$. Now, it is given $g_1(x)=g_4(x)$. Since $\phi$ is an isomorphism therefore $g(x)=a_3(x)$. We know from Proposition $3.3$ of {\rm\cite{Pkk-Bg15}} that $A=\langle g(x)+up_1(x)+vq_1(x)+vur_1(x)\rangle$ if and only if $g(x)=a_3(x)$. Now $\phi(g(x)+up_1(x)+vq_1(x)+vur_1(x))=g_1(x)+ug_{11}(x)+vg_{12}(x)+vug_{13}(x)$, Hence $C=\langle g_1(x)
+ug_{11}(x)+vg_{12}(x)+vug_{13}(
x)\rangle$. Again we have $\phi(x^n-1)=-(x^n+1)$ and we know from Proposition $3.3$ of {\rm\cite{Pkk-Bg15}} that $g(x)+up_1(x)+vq_1(x)+vur_1(x)|(x^n-1)$. Hence $g_1(x)+ug_{11}(x)+vg_{12}(x)+vug_{13}(x)|(x^n+1)$.
\end{proof}
Note that we get the simpler form for the generators of the negacyclic code over $R_{u^2,v^2,p}$, like in the above theorem, if we have $g_1(x) = g_2(x), g_3(x)$ or $g_4(x) = g_2(x), g_3(x)$.\\

In the following theorem we write the structure of $C$ when $n$ be relatively prime to $p$.

\begin{theo}
Let $C$ be a negacyclic code over the ring $R_{u^2,v^2,p}$ of length $n$. If $n$ is relatively prime to $p$, then we have $C=\langle g_1(x)+ug_2(x)+uvg_{13}(x), vg_3(x)+uvg_4(x)\rangle$ with $g_2(x)|g_1(x)|(x^n+1)$ and $ g_4(x)|g_3(x)|g_1(x)|(x^n+1)$.
\end{theo}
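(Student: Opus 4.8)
The plan is to reuse the transfer-of-structure argument that drives this entire section: pull $C$ back to a cyclic code through $\phi^{-1}$, invoke the cyclic-code structure theorem of \cite{Pkk-Bg15} in the separable case $\gcd(n,p)=1$, and then push the result forward through $\phi$.

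First I would apply Proposition \ref{cyclic-negacyclic} to obtain the unique cyclic code $A$ of length $n$ over $R_{u^2,v^2,p}$ with $\phi(A)=C$. Because $\gcd(n,p)=1$, the polynomial $x^n-1$ is separable over $\F_p$, so the corresponding result of \cite{Pkk-Bg15} presents $A$ in the simplified two-generator form
\[
A=\langle g(x)+up_1(x)+uvr_1(x),\ va_2(x)+uvr_3(x)\rangle,
\]
with $p_1(x)\mid g(x)\mid(x^n-1)$ and $r_3(x)\mid a_2(x)\mid g(x)\mid(x^n-1)$.

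Next I would compute the image of each generator under $\phi$. Using $\phi(f(x))=f(-x)$ and setting $g_1(x)=g(-x)$, $g_2(x)=p_1(-x)$, $g_{13}(x)=r_1(-x)$, $g_3(x)=a_2(-x)$, $g_4(x)=r_3(-x)$, the two generators of $A$ are carried to $g_1(x)+ug_2(x)+uvg_{13}(x)$ and $vg_3(x)+uvg_4(x)$. Since $\phi$ is a ring isomorphism, $C=\phi(A)$ is precisely the ideal generated by these two images, which is the asserted form.

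Finally I would transfer the divisibility chains. The key observation is that for $f,h\in\F_p[x]$ one has $f(x)\mid h(x)$ if and only if $f(-x)\mid h(-x)$, and that $\phi(x^n-1)=(-x)^n-1=-(x^n+1)$ because $n$ is odd; hence $f(x)\mid(x^n-1)$ is equivalent to $f(-x)\mid(x^n+1)$. Applying this substitution to $p_1\mid g\mid(x^n-1)$ and $r_3\mid a_2\mid g\mid(x^n-1)$ yields exactly $g_2(x)\mid g_1(x)\mid(x^n+1)$ and $g_4(x)\mid g_3(x)\mid g_1(x)\mid(x^n+1)$. The genuine content---the collapse from four generators to two---resides in the cyclic theorem of \cite{Pkk-Bg15}; the only obstacle on the negacyclic side is the routine check that the substitution $x\mapsto -x$ preserves each divisibility relation while interchanging $x^n-1$ and $x^n+1$, which I would verify directly.
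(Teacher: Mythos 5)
Your proposal is correct and follows essentially the same route as the paper: pull $C$ back to the unique cyclic code $A$ via Proposition \ref{cyclic-negacyclic}, invoke Theorem 3.4 of \cite{Pkk-Bg15} for the coprime case to get the two-generator form, and push the generators and divisibility chains forward through $\phi$. Your explicit justification that $f(x)\mid h(x)$ iff $f(-x)\mid h(-x)$ and that $\phi(x^n-1)=-(x^n+1)$ is a small improvement in rigor over the paper, which asserts the transferred divisibility relations without comment.
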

\begin{proof}
Let $C$ be a negacyclic code over the ring $R_{u^2,v^2,p}$. Hence from Proposition \ref{cyclic-negacyclic}, there exists one and only one cyclic code $A$ over the ring $R_{u^2,v^2,p}$ such that $\phi(A)=C$. If $n$ is relatively prime to $p$, then from Theorem $3.4$ of {\rm\cite{Pkk-Bg15}}, we can write the cyclic code $A=\langle g(x)+ua_1(x)+uvr_1(x), va_2(x)+uva_3(x)\rangle$ with $a_1(x)|g(x)|(x^n-1)$ and $ a_3(x)|a_2(x)|g(x)|(x^n-1)$.
Let $\phi(g(x)+ua_1(x)+uvr_1(x))=g_1(x)+ug_2(x)+uvg_{13}(x)$ and $\phi(va_2(x)+uva_3(x))=vg_3(x)+uvg_4(x)$, where, $g(-x)
=g_1(x)$, $r_1(-x)=g_{13}(x)$, $a_1(-x)=g_2(x)$, $a_2(-x)=g_3(x)$, $a_3(-x)=g_4(x)$. Therefore $C$ can be written as $C=\langle g_1(x)+ug_2(x)+uvg_{13}(x), vg_3(x)+uvg_4(x)\rangle$ with $g_2(x)|g_1(x)|(x^n+1)$ and $ g_4(x)|g_3(x)|g_1(x)|(x^n+1)$.
\end{proof}
\section{The Ranks and the minimum distance}\label{rank}
In this section, we find the rank and minimal spanning set of negacyclic codes over the ring $R_{u^2,v^2,p}$. Following Dougherty and Shiromoto \cite[page 401]{Dou-Shiro01}, we define the rank of the code $C$ by the minimum number of generators of $C$ and define the free rank of $C$ by the maximum of the ranks of $R_{u^2,v^2,p}$-free submodules of $C$.\\
\begin{theo}
Let $n$ be not relatively prime to $p$. Let $C$ be a negacyclic code over the ring $R_{u^2,v^2,p}$ of length $n$. If $C=\langle g_1(x)+ug_{11}(x)+vg_{12}(x)+vug_{13}(x),ug_2(x)+vg_{22}(x)+vug_{23}(x),vg_3(x)+vug_{33}(x),vug_4(x)\rangle$ with $deg(g_1(x))\linebreak =r_1$, $deg(g_2(x))=r_2$, $deg(g_3(x))=r_3$, $deg(g_4(x))=r_4$, then the minimal spanning set of $C$ is $B=\{A_1, xA_1, \cdots , x^{n-r_1-1}A_1, A_2, xA_2, \cdots , x^{r_1-r_2-1}A_2,\linebreak A_3, xA_3, \cdots , x^{r_1-r_3-1}A_3, A_4, xA_4, \cdots, x^{r'-r_4-1}A_4\}$, where, $r'=min\{r_2, r_3\}$ and $A_1=g_1(x)+ug_{11}(x)+vg_{12}(x)+vug_{13}(x)$, $A_2= ug_2(x)+vg_{22}(x)+vug_{23}(x)$, $A_3=vg_3(x)+vug_{33}(x)$, $A_4=vug_4(x)$ also $C$ has free rank $n-r_1$ and rank $n+r_1+r'-r_2-r_3-r_4$.
\end{theo}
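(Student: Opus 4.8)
The plan is to prove the statement by transporting the analogous result for cyclic codes through the ring isomorphism $\phi$, exactly as in the preceding theorems of this section. By Proposition \ref{cyclic-negacyclic} there is a unique cyclic code $A$ over $R_{u^2,v^2,p}$ with $\phi(A)=C$, and by Theorem \ref{negacyclic} its generators $\tilde A_1,\tilde A_2,\tilde A_3,\tilde A_4$ satisfy $A_i=\phi(\tilde A_i)$, where $\tilde A_1=g(x)+up_1(x)+vq_1(x)+vur_1(x)$, and so on. The corresponding rank theorem for cyclic codes in \cite{Pkk-Bg15} supplies a minimal spanning set $\tilde B$ of $A$, together with the values of the free rank and the rank, all expressed in terms of the degrees $\deg(g),\deg(a_1),\deg(a_2),\deg(a_3)$. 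My intention is to show that $\phi$ carries $\tilde B$ onto $B$ (up to unit multiples) and that it preserves both notions of rank.

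First I would record that $\phi$ fixes $R_{u^2,v^2,p}$ pointwise, since constants are unchanged by $x\mapsto -x$, so the ring isomorphism $\phi$ is in fact an $R_{u^2,v^2,p}$-module isomorphism from $\frac{R_{u^2,v^2,p}[x]}{\langle x^n-1\rangle}$ onto $\frac{R_{u^2,v^2,p}[x]}{\langle x^n+1\rangle}$. Consequently $\phi$ maps any minimal generating set of the module $A$ to a minimal generating set of $C$, sends free submodules to free submodules of the same rank, and therefore preserves the minimum number of generators and the free rank verbatim. Next I would invoke degree preservation: for $f\in\F_p[x]$ one has $\deg(\phi(f))=\deg(f(-x))=\deg(f)$, so the degrees $\tilde r_i$ of the cyclic generators equal the degrees $r_i=\deg(g_i)$ of the negacyclic generators, and likewise $\min\{\tilde r_2,\tilde r_3\}=\min\{r_2,r_3\}=r'$. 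This already forces the free rank $n-r_1$ and the rank $n+r_1+r'-r_2-r_3-r_4$ to agree with the cyclic values, and makes the index ranges in $B$ match those in $\tilde B$.

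The remaining point, which I expect to be the only genuinely delicate step, is to identify $\phi(\tilde B)$ with the listed set $B$. Since $\phi(x^j\tilde A_i)=(-x)^j\phi(\tilde A_i)=(-1)^jx^jA_i$ and $(-1)^j$ is a unit in $R_{u^2,v^2,p}$, each member of $\phi(\tilde B)$ generates the same cyclic $R_{u^2,v^2,p}$-submodule as the corresponding element $x^jA_i$ of $B$. Hence $\phi(\tilde B)$ and $B$ coincide up to multiplication of individual members by units, so $B$ spans $C$, has the same cardinality as $\tilde B$, and is minimal precisely because $\tilde B$ is. I would close by noting that the same conclusions can be reached in a self-contained way, without \cite{Pkk-Bg15}, by a direct division argument: the shifts $x^jA_1$ with $0\le j\le n-r_1-1$ form a free submodule of rank $n-r_1$ because $A_1$ has unit leading coefficient and degree $r_1$ (by Theorem \ref{unique} each $g_{1j}$ has degree below $r_1$), while the relations $g_2\mid g_1$, $g_3\mid g_1$, $g_4\mid g_2$, $g_4\mid g_3$ let one reduce $x^jA_2$, $x^jA_3$, $x^jA_4$ beyond the stated ranges; the hard part there would be proving minimality, i.e.\ that the images of $B$ are $\F_p$-independent in $C/\langle u,v\rangle C$, which the transport argument bypasses entirely.
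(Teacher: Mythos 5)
Your proposal follows the same route as the paper's own proof: pass to the unique cyclic code $A$ with $\phi(A)=C$, quote the minimal-spanning-set and rank theorem for cyclic codes from \cite{Pkk-Bg15}, and push that spanning set forward through $\phi$ using $\phi(\tilde A_i)=A_i$ and degree preservation. You are in fact somewhat more careful than the paper, which silently identifies $\phi(x^j\tilde A_i)$ with $x^jA_i$ and never remarks that $\phi$ fixes $R_{u^2,v^2,p}$ and is hence a module isomorphism; your explicit treatment of the unit factors $(-1)^j$ and of why rank, free rank, and minimality transfer supplies exactly the details the paper glosses over.
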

\begin{proof}
Let $C$ be a negacyclic code over the ring $R_{u^2,v^2,p}$ of length $n$, where $n$ is not relatively prime to $p$. From the Proposition \ref{cyclic-negacyclic}, we get that there exists a cyclic code $A$ over the same ring such that $\phi(A)=C$. Now, we know from Theorem $4.1$ of {\rm\cite{Pkk-Bg15}}, the minimal spanning set of the cyclic code $A$ over the ring $R_{u^2,v^2,p}$ is $\{g(x)+up_1(x)+vq_1(x)+vur_1(x), x(g(x)+up_1(x)+vq_1(x)+vur_1(x)), \cdots, x^{n-r_1-1}(g(x)+up_1(x)+vq_1(x)+vur_1(x)), ua_1(x)+vq_2(x)+vur_2(x), x(ua_1(x)+vq_2(x)+vur_2(x)), \cdots, x^{r_1-r_2-1}(ua_1(x)+vq_2(x)+vur_2(x)), va_2(x)+vur_3(x), x(va_2(x)+vur_3(x)), \cdots, x^{r_1-r_3-1}(va_2(x)+vur_3(x)),\linebreak vua_3(x), x(vua_3(x)), \cdots, x^{r'-r_4-1}(vua_3(x))\}$, where, $r'=min\{r_2, r_3\}$. From Theorem \ref{negacyclic} we have $\phi(g(x)+up_1(x)+vq_1(x)+vur_1(x))=A_1, \phi(ua_1(x)+vq_2(x)+vur_2(x))=A_2, \phi(va_2(x)+vur_3(x))=A_3$ and $\phi(vua_3(x))=A_4$. Therefore the spanning set of negacyclic code $C$ over the ring $R_{
u^2,v^2,p}$ is $B=\{A_1, xA_1, \cdots , x^{n-r_
1-1}A_1, A_2, xA_2, \cdots , x^{r_1-r_2-1}A_2, A_3, xA_3, \cdots , x^{r_1-r_3-1}A_3, A_4,\linebreak xA_4, \cdots, x^{r'-r_4-1}A_4\}$, where, $r'=min\{r_2, r_3\}$ and $A_1=g_1(x)+ug_{11}(x)+vg_{12}(x)+vug_{13}(x)$, $A_2= ug_2(x)+vg_{22}(x)+vug_{23}(x)$, $A_3=vg_3(x)+vug_{33}(x)$, $A_4=vug_4(x)$. 
\end{proof}

Let $n$ be a positive integer not relatively prime to $p$. Let $C$ be a negacyclic code of length $n$ over the ring $R_{u^2,v^2,p}$. We know that there exists a cyclic code $A$ of length $n$ over the ring $R_{u^2,v^2,p}$ such that $\phi(A)=C$, where, $\phi$ is defined as $\phi(f(x))=f(-x)$, for  $f(x) \in A$. The following lemma shows that the isomorphism $\phi$ is a distance preserving map.\\
\begin{lemma}\label{lm}
Let $\phi(A)=C$, where, $A$ and $C$ are the cyclic and negacyclic codes of length $n$ over the ring $R_{u^2,v^2,p}$ and $\phi$ is defined as $\phi(f(x))=f(-x)$, for  $f(x) \in A$, then, $w_{H}(f(x)) = w_{H}(\phi(f(x)))$.
\end{lemma}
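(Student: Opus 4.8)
The plan is to unwind the definition of the Hamming weight directly and to observe that the substitution $x\mapsto -x$ only rescales each coefficient by a unit, so that it cannot change the support of a codeword. Write a generic element of $A$ as $f(x)=\sum_{i=0}^{n-1}c_ix^i$ with $c_i\in R_{u^2,v^2,p}$, and recall that $w_{H}(f(x))$ is by definition the number of indices $i$ for which $c_i\neq 0$, that is, the cardinality of the support $\mathrm{supp}(f)=\{\,i : c_i\neq 0\,\}$.

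Next I would compute, from the definition of $\phi$ in Proposition \ref{isomorphism}, that $\phi(f(x))=f(-x)=\sum_{i=0}^{n-1}(-1)^ic_ix^i$, so the coefficient of $x^i$ in $\phi(f(x))$ is exactly $(-1)^ic_i$. Since $-1$ is a unit of $R_{u^2,v^2,p}$ (indeed $(-1)^i=\pm 1$ and $(-1)(-1)=1$), multiplication by $(-1)^i$ is a bijection of the ring, whence $(-1)^ic_i=0$ if and only if $c_i=0$. Consequently $\mathrm{supp}(\phi(f))=\mathrm{supp}(f)$, and counting the elements of these two equal sets yields $w_{H}(\phi(f(x)))=w_{H}(f(x))$, as claimed.

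I do not expect a genuine obstacle here, since the statement is essentially the remark that a coordinatewise rescaling by units preserves Hamming weight. The only point deserving a word of care is that the passage to the quotient $R_{u^2,v^2,p}[x]/\langle x^n+1\rangle$ neither merges nor splits coefficients: because $f\in A$ is represented by a polynomial of degree at most $n-1$, the polynomial $\sum_{i=0}^{n-1}(-1)^ic_ix^i$ already lies in the standard system of representatives for the quotient, so no reduction modulo $x^n+1$ is performed and the support can be read off coefficientwise. This justifies the identification of $w_{H}$ of the codeword with the number of nonzero coefficients of its polynomial representative used above, completing the argument.
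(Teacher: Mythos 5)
Your proof is correct and follows essentially the same route as the paper: expand $f(x)=\sum_{i=0}^{n-1}c_ix^i$, observe that $\phi(f(x))=\sum_{i=0}^{n-1}(-1)^ic_ix^i$, and note that each coefficient $(-1)^ic_i$ vanishes exactly when $c_i$ does, so the Hamming weights agree. Your added remarks (that $-1$ is a unit, and that no reduction modulo $x^n+1$ occurs for representatives of degree at most $n-1$) merely make explicit points the paper leaves implicit.
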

\begin{proof}
Let $f(x)=\sum_{i=0}^{n-1}c_ix^i$, where, $c_i \in R_{u^2,v^2,p}$. Now $\phi(f(x))=f(-x)=\sum_{i=0}^{n-1}(-1)^ic_ix^i$. Therefore the coefficient of $x^i$ of $f(x)$ and $f(-x)$ are $c_i$ and $(-1)^ic_i$. That is both coefficient are simultaneously $0$ or non $0$. Hence, $w_{H}(f(x))=w_{H}(\phi(f(x)))$.
\end{proof}
\begin{theorem} \label{md1}
Let $n$ be not relatively prime to $p$. If $ C = \langle A_1, A_2, A_3, A_4\rangle$ is a negacyclic code of length $n$ over the ring $R_{u^2,v^2,p}$. Then $w_{H}(C)=w_{H}(A)$, where, $A$ is the cyclic codes over the ring $R_{u^2,v^2,p}$ such that $\phi(A)=C$.
\end{theorem}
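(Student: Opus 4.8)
The plan is to deduce the equality of the two minimum distances directly from the weight-preserving property established in Lemma \ref{lm}, combined with the bijectivity of $\phi$. Recall that $w_{H}(C)$ denotes the minimum Hamming weight taken over all nonzero codewords, i.e. $w_{H}(C)=\min\{w_{H}(c):c\in C,\ c\neq 0\}$, and likewise $w_{H}(A)=\min\{w_{H}(a):a\in A,\ a\neq 0\}$. So the whole statement reduces to showing that these two minima are computed over sets of weights that coincide.

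First I would record that $\phi$ restricted to $A$ is a bijection onto $C$. By Proposition \ref{isomorphism} the map $\phi$ is a ring isomorphism, and by hypothesis $\phi(A)=C$; hence $\phi|_{A}\colon A\to C$ is a bijection. Since $\phi(0)=0$ and $\phi$ is injective, $\phi$ carries the nonzero codewords of $A$ bijectively onto the nonzero codewords of $C$. Next I would invoke Lemma \ref{lm}, which gives $w_{H}(f(x))=w_{H}(\phi(f(x)))$ for every $f(x)\in A$. Combining this identity with the bijection just noted, the set of Hamming weights realized by the nonzero codewords of $A$ is exactly the set of Hamming weights realized by the nonzero codewords of $C$. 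Taking the minimum over these two identical sets yields $w_{H}(A)=w_{H}(C)$, as required.

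There is no genuine obstacle in this argument; the substantive content is entirely carried by Lemma \ref{lm}, and everything else is bookkeeping. The only point deserving a moment's care is to confirm that $\phi$ sends nonzero codewords to nonzero codewords, so that the two minima are indeed taken over corresponding sets rather than being spuriously attained at the zero word; this is immediate from the injectivity of $\phi$ together with $\phi(0)=0$. I would therefore present the proof as a short two-line deduction: state the weight-preserving bijection $\phi|_{A}$ between the nonzero codewords of $A$ and those of $C$, and then conclude equality of the minima.
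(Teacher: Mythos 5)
Your proof is correct and follows essentially the same route as the paper: both arguments rest entirely on Lemma \ref{lm} together with the fact that $\phi$ is a weight-preserving bijection from $A$ onto $C$. The paper phrases the conclusion as a proof by contradiction (a lower-weight codeword in $C$ would pull back to a lower-weight codeword in $A$), while you phrase it as equality of the two sets of nonzero-codeword weights; these are the same argument, and your explicit remark that $\phi$ sends nonzero codewords to nonzero codewords is a small point the paper leaves implicit.
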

\begin{proof}
Let $h(x)$ be the minimum weighted polynomial in $A$ and the weight is $w_{H}(h(x))=m$. There exists a polynomial $f(x) \in C$ such that $\phi(h(x))=f(x)$. From Lemma \ref{lm}, the weight of $w_{H}(f(x))=m$. Now we prove that $f(x)$ is the minimum weighted polynomial in $C$. If possible, let $f_1(x)$ be the minimum weighted polynomial of $C$ and $w_{H}(f_1(x)) < m$. There exists a polynomial $h_1(x) \in A$ such that $\phi(h_1(x))=f_1(x)$. Again, from Lemma \ref{lm}, $w_{H}(h_1(x))=w_{H}(f_1(x))<m$. Hence, a contradiction that $h(x)$ be the minimum weighted polynomial in $A$. Therefore, $w_{H}(C)=w_{H}(A)$.
\end{proof}
\begin{definition}
Let $ m = b_{l-1}p^{l-1} + b_{l-2}p^{l-2} + \cdots + b_1p + b_0$, $b_i \in \F_p, 0 
\leq i \leq l-1$, be the $p$-adic expansion of $m$.
\begin{enumerate} [{\rm (1)}]
 \item If $ b_{l-i}  \neq 0$ for all $1  \leq i \leq q, q < l, $ and $ b_{l-i} = 0 $ for all $i, q+1 \leq i \leq l$, then $m$ is said to have a $p$-adic length $q$ zero expansion.
\item If $ b_{l-i}  \neq 0$ for all $1  \leq i \leq q, q < l, $ $b_{l-q-1} = 0$ and $ b_{l-i} \neq 0 $ for some $i, q+2 \leq i \leq l$, then $m$ is said to have  $p$-adic length $q$ non-zero expansion.
\item If $ b_{l-i}  \neq 0$ for $1  \leq i \leq l, $ then $m$ is said to have a $p$-adic length $l$  expansion or $p$-adic full expansion.
\end{enumerate}
\end{definition}
The following theorem follows from the above theorem and Theorem 5.4 of \cite{Pkk-Bg15}.
\begin{theorem} \label{md-thm}
Let $C$ be a negacyclic code over the ring $R_{u^2,v^2,p}$ of length $p^l$ where $l$ is a positive integer. Then,  $C=\langle A_1, A_2, A_3, A_4 \rangle$, where, $g_1(x) = (x+1)^{t_1}, g_2(x) = (x+1)^{t_2}, g_3(x) = (x+1)^{t_3}, g_4(x) = (x+1)^{t_4}$,  for some $t_1 > t_2 > t_4 > 0$, $t_1 > t_3 > t_4 > 0$ $(where A_i$'s and $g_i$'s are defined in Theorem \ref{unique}$)$
\begin{enumerate}[{\rm (1)}]
\item If $t_4 \leq p^{l-1},$ then $d(C) = 2$. 
\item If $t_4 > p^{l-1}$, let $t_4 = b_{l-1}p^{l-1} + b_{l-2}p^{l-2} + \cdots + b_1p + b_0$ be the $p$-adic expansion of $t_4$ and $g_4(x)=(x+1)^{t_4}=(x^{p^{l-1}}+1)^{b_{l-1}}(x^{p^{l-2}}+1)^{b_{l-2}} \cdots (x^{p^{1}}+1)^{b_1}(x^{p^0}+1)^{b_0}$.
\begin{enumerate}[{\rm ($a$)}]
 \item If $t_4$ has a $p$-adic length $q$ zero expansion or full expansion $(l=q)$, then $d(C) = (b_{l-1}+1)(b_{l-2}+1)\cdots(b_{l-q}+1)$.
\item If $t_4$ has a $p$-adic length $q$ non-zero expansion, then $d(C)=2(b_{l-1}+1)(b_{l-2}+1)\cdots(b_{l-q}+1)$.
\end{enumerate}
\end{enumerate}
\end{theorem}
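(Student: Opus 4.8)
The plan is to leverage the distance-preserving isomorphism $\phi$ established in Lemma \ref{lm} and Theorem \ref{md1}, which reduce the computation of $w_H(C)$ entirely to the computation of $w_H(A)$ for the associated cyclic code $A$ with $\phi(A)=C$. Since Theorem \ref{md1} gives $w_H(C)=w_H(A)$ unconditionally (for $n$ not relatively prime to $p$, which holds here as $n=p^l$), the entire problem becomes: determine the minimum Hamming distance of the cyclic code $A$ over $R_{u^2,v^2,p}$ of length $p^l$ whose generator data pulls back under $\phi$ to the given $g_i(x)=(x+1)^{t_i}$.

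First I would apply $\phi^{-1}$ to translate the hypotheses into the cyclic setting. Because $\phi(f(x))=f(-x)$ and the restriction of $\phi$ sends $\frac{\F_p[x]}{\langle x^n-1\rangle}$ to $\frac{\F_p[x]}{\langle x^n+1\rangle}$, each factor $(x+1)^{t_i}$ of $x^n+1$ corresponds to a factor $(x-1)^{t_i}=(x-1)^{t_i}$ of $x^n-1$; concretely, $\phi^{-1}((x+1)^{t_i})=(-x+1)^{t_i}=(-1)^{t_i}(x-1)^{t_i}$, so up to a unit the cyclic code $A$ is generated by the corresponding powers of $(x-1)$. The point is that the minimum distance of such a repeated-root cyclic code over $R_{u^2,v^2,p}$ of length $p^l$ has already been computed in Theorem 5.4 of \cite{Pkk-Bg15}, stated in terms of the $p$-adic expansion of the exponent $t_4$ of the innermost generator $g_4$. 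Since $\deg$ and the divisibility chain $t_1>t_2>t_4>0$, $t_1>t_3>t_4>0$ are preserved by $\phi$ (as $\phi$ preserves degrees, by the observation $\deg(\phi(f))=\deg(f)$ used in Theorem \ref{unique}), the hypotheses of that cited theorem are met for $A$.

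The remaining work is essentially bookkeeping: I would quote the value of $w_H(A)$ from Theorem 5.4 of \cite{Pkk-Bg15} in each of the three cases (the $p$-adic length $q$ zero expansion, the full expansion $l=q$, and the $p$-adic length $q$ non-zero expansion), namely $d=2$ when $t_4\le p^{l-1}$, and the products $(b_{l-1}+1)\cdots(b_{l-q}+1)$ or twice that product otherwise, and then invoke $w_H(C)=w_H(A)$ to conclude the identical formula for $C$. The one structural point to verify is that the minimum weight of $A$ is governed by its innermost generator $vu g_4(x)=vu(x-1)^{t_4}$ (up to unit), since $vu$ annihilates $u$ and $v$ and the $vu\F_p$-component carries the smallest-support codewords; this is exactly what Theorem 5.4 of \cite{Pkk-Bg15} encodes, so no new distance estimate is needed.

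The main obstacle, and the only place demanding care, is the translation step: one must confirm that the unit factors $(-1)^{t_i}$ introduced by $\phi^{-1}$ do not disturb the generator structure or the divisibility relations, and that the $p$-adic factorization $g_4(x)=(x+1)^{t_4}=(x^{p^{l-1}}+1)^{b_{l-1}}\cdots(x+1)^{b_0}$ written in the negacyclic variable corresponds correctly under $\phi$ to the factorization of $(x-1)^{t_4}$ used in \cite{Pkk-Bg15}. Once that correspondence is pinned down, the theorem follows immediately from Theorem \ref{md1} together with the cited result, with no genuine computation required.
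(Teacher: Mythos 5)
Your proposal is correct and follows exactly the paper's route: the paper proves this theorem with the single remark that it ``follows from the above theorem [Theorem \ref{md1}] and Theorem 5.4 of \cite{Pkk-Bg15}'', i.e., reduce to the associated cyclic code via the weight-preserving isomorphism $\phi$ and quote the known distance formula for cyclic codes of length $p^l$. Your additional care about the unit factors $(-1)^{t_i}$ and the correspondence of the $p$-adic factorizations is sound bookkeeping that the paper leaves implicit.
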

%\begin{proof}
%Let $A$ be a cyclic codes over the ring $R_{u^2,v^2,p}$ such that  $\phi(A)=C$. Let $A=\langle g(x)+up_1(x)+vq_1(x)+vur_1(x),ua_1(x)+vq_2(x)+vur_2(x),va_2(x)+vur_3(x),vua_3(x)\rangle$ and $C=\langle g_1(x)+ug_{11}(x)+vg_{12}(x)+vug_{13}(x),ug_2(x)+vg_{22}(x)+vug_{23}(x),vg_3(x)+vug_{33}(x),vug_4(x)\rangle$, where, $g_1(x) = (x+1)^{t_1}, g_2(x) = (x+1)^{t_2}, g_3(x) = (x+1)^{t_3}, g_4(x) = (x+1)^{t_4}$. Therefore, $g(x)=(-1)^{t_1}(x-1)^{t_1}, a_1(x)=(-1)^{t_2}(x-1)^{t_2}, a_2(x)=(-1)^{t_3}(x-1)^{t_3}, a_3(x)=(-1)^{t_4}(x-1)^{t_4}$.\\
%$(1)$ From the Theorem [~] of [paper], we know that, if $t_4 \leq p^{l-1},$ then $d(A)=2$. Therefore from Lemma 5.2, we have $d(C)=2$.\\
%$(2)$ Let $t_4 > p^{l-1}$, and $t_4 = b_{l-1}p^{l-1} + b_{l-2}p^{l-2} + \cdots + b_1p + b_0$ be the $p$-adic expansion of $t_4$.\\
%$(a)$ If $t_4$ has a $p$-adic length $q$ zero expansion or full expansion $(l=q)$, then from the Theorem [~] of [paper], we know that, $d(A) = (b_{l-1}+1)(b_{l-2}+1)\cdots(b_{l-q}+1)$. Therefore from Lemma 5.2, we have $d(C)=(b_{l-1}+1)(b_{l-2}+1)\cdots(b_{l-q}+1)$.\\
%$(b)$ Again, from the Theorem [~] of [paper], we know that, If $t_4$ has a $p$-adic length $q$ non-zero expansion, then $d(A)=2(b_{l-1}+1)(b_{l-2}+1)\cdots(b_{l-q}+1)$. Hence, from Lemma 5.2, we have $d(C)=2(b_{l-1}+1)(b_{l-2}+1)\cdots(b_{l-q}+1)$.
%\end{proof}
\section{Examples} \label{exm}

\begin{example}
Negacyclic codes of length $5$ over $R_{u^2,v^2,5} = \F_5 + u \F_5 + v \F_5 + uv \F_5, u^2 = 0, v^2 = 0, uv = vu$: We have
$$ x^5+1=(x+1)^5 ~\text{over}~ R_{u^2,v^2,5}.$$
Let $g=x+1$ and $c_0, c_1, c_2, c_3, c_4, c_5 \in \F_5$. The non-zero negacyclic codes of length $5$ over $R_{u^2,v^2,5}$ with generator polynomial, rank and minimum distance are given in tables below.\\
\end{example}

\begin{center}
{\bf Table 1.} All non zero free negacyclic codes of length 5 over $R_{u^2,v^2,5}$.
\begin{tabular}{| l | c| c |}
\hline
Non-zero generator polynomials & Rank & d(C)\\
\hline
$<g^4+uc_0g^3+vc_1g^3+uvc_2g^3>$, $c_0c_1=0$&1&5\\
\hline
$<g^3+uc_0g^2+vc_1g^2+uv(c_2+c_3x)g>$&2&4\\
\hline
$<g^2+u(c_0+c_1x)+v(c_2+c_3x)+uv(c_4+c_5x)>$, &3&3\\ $c_0=c_1$ or $c_2=c_3$&&\\
\hline
$<g+uc_0+vc_1+uvc_2>$&4&2\\
\hline
$<1>$ &5&1\\
\hline
\end{tabular}\\
\end{center}
\newpage
{\bf Table 2.} All non zero non free single generated negacyclic codes of length 5 over $R_{u^2,v^2,5}$.
\begin{center}
\begin{tabular}{| l | c| c |}
\hline
Non-zero generator polynomials & Rank & d(C)\\
\hline
$<ug^4+vc_0g^4+uvc_1g^3>$&1&5 \\
\hline
$<vg^4+uvc_0g^3>$&1&5 \\
\hline
$<uvg^4>$&1&5 \\
\hline
$<ug^3+v(c_0+c_1x)g^3+uv(c_2+c_3x)g>$&2&4 \\
\hline
$<vg^3+uv(c_0+c_1x)g>$&2&4\\
\hline
$<uvg^3>$&2&4\\
\hline
$<ug^2+v(c_0+c_1x+c_2x^2)g^2+uv(c_3+c_4x)>$&3&3\\
\hline
$<vg^2+uv(c_0+c_1x)>$&3&3\\
\hline
$<uvg^2>$&3&3\\
\hline
$<ug+v(c_0+c_1x+c_2x^2+c_3x^3)g+uvc_4>$&4&2\\
\hline
$<vg+uvc_0>$&4&2\\
\hline
$<uvg>$&4&2\\
\hline
$<u+v(c_0+c_1x+c_2x^2+c_3x^3+c_4x^4)>$&5&1 \\
\hline
$<v>$&5&1 \\
\hline
$<uv>$&5&1 \\
\hline
\end{tabular}\\
\end{center}

\begin{center}
{\bf Table 3.} Some non zero non free negacyclic codes of length 5 over $R_{u^2,v^2,5}$.

\begin{tabular}{| l | c| c |}
\hline
Non-zero generator polynomials & Rank & d(C)\\
\hline
$<g^4+uc_0g^3+vc_1g^3+uvc_2g^2, uvg^3>$&2&4\\
\hline
$<ug^4+uvc_0g^3, vg^4+uvc_1g^3>$&2&5\\
\hline
$<ug^4+v(c_0+c_1x)g^3+uvg^2, uvg^3>$&2&4\\
\hline
$<ug^4+vc_0g^3+uvc_1g^2, vg^4>$&2&5\\
\hline
$<ug^4+uvc_0g^2, uvg^3>$&2&4\\
\hline
$<vg^4+uvc_0g^2, uvg^3>$&2&4\\
\hline
$<vg^4+uvc_0g, uvg^2>$ &3&3\\
\hline
$<g^3+uc_0g+vc_1g+uvc_2, ug^2+vc_3g+uvc_4$,&5&2\\ $vg^2+uvc_5, uvg>$, $c_0c_2=0$&&\\
\hline
$<ug^3+v(c_0+c_1x)g^3+uv(c_2+c_3x), uvg^2>$&3&3\\
\hline
$<vg^3+uvc_0, uvg>$ &4&2\\
\hline
$<g^2+uc_0+vc_1, ug+vc_2, vg, uv>$&6&1\\
\hline
$<ug^2+vc_0+uvc_1, vg^2+uvc_2, uvg>$&7&2\\
\hline
$<vg^2+uvc_2, uvg>$&4&2\\
\hline
$<g+uc_0+vc_1, uv>$&5&1\\
\hline
$<g+uc_0,  v>$&5&1\\
\hline
$<g+vc_0, u+vc_1 >$&5&1\\
\hline
$<g, u, v>$&6&1\\
\hline
$<ug+vc_0, vg, uv>$&9&1 \\
\hline
$<vg, uv>$&5&1 \\
\hline
$< u, v >$&10&1 \\
\hline
\end{tabular}
\end{center}

\bibliographystyle{plain}
\bibliography{ref}
\end{document}